\documentclass{sig-alternate}

\usepackage[latin1]{inputenc}
\usepackage{makeidx}  % allows for indexgeneration
\usepackage{amsmath,amssymb,amsfonts,latexsym,epsfig} 
\usepackage{psfrag, xspace}
\usepackage{hyperref}
\usepackage{framed}
\usepackage{bm}
\usepackage{enumerate}
\usepackage{algorithmic} 
\usepackage[linesnumbered,vlined,boxruled,boxed,algo2e]{algorithm2e}

\usepackage{notations}

\usepackage{hyperref}
\newtheorem{theorem}{Theorem}
\newtheorem{lemma}[theorem]{Lemma}

\newtheorem{corollary}[theorem]{Corollary}

\hyphenation{analy-sis} 

\def\wt#1{\widetilde #1}

\def\normi#1{\norm{#1}_{\infty}}

\usepackage[paper=letterpaper,width=7in,height=9.25in,centering]{geometry}

\begin{document}

% \conferenceinfo{ISSAC'13,} {June 26--29, 2013, Boston, Massachusetts, USA.}
% \CopyrightYear{2013}
% \crdata{978-1-4503-2059-7/13/06}
% \clubpenalty=10000
% \widowpenalty = 10000

\title{Accelerated Approximation of the Complex  Roots of a Univariate Polynomial\\
\textit{Extended Abstract} }

\numberofauthors{2}
\author{
  \alignauthor Victor Y.~Pan \\
  \affaddr{Depts. of Mathematics and Computer Science \\
    Lehman College and Graduate Center \\
    of the City University of New York \\
    Bronx, NY 10468 USA} 
%
%$^{[2]}$ Ph.D. Programs in Mathematics  and Computer Science \\
%The Graduate Center of the City University of New York \\
%New York, NY 10036 USA \\
%$^{[a]}$ victor.pan@lehman.cuny.edu \\
%http://comet.lehman.cuny.edu/vpan/  
%
%
  \email{victor.pan@lehman.cuny.edu}
  \url{http://comet.lehman.cuny.edu/vpan/} 
  \alignauthor Elias P.~Tsigaridas \\
  \affaddr{ 
    INRIA, Paris-Rocquencourt~Center, PolSys~Project \\
    UPMC, Univ Paris 06, LIP6 \\
    CNRS, UMR 7606, LIP6 \\
    Paris, France
    }
  % \affaddr{Computer Science Department \\Aarhus University, Denmark}\\
  \email{elias.tsigaridas@inria.fr}
}

\maketitle

% \begin{abstract}
 
% \end{abstract} 

% \vspace{10pt}

% \noindent
% {\bf Categories and Subject Descriptors:}\\
% F.2 [Theory of Computation]: Analysis of Algorithms and Problem Complexity;
% I.1 [Computing Methodology]: Symbolic and algebraic manipulation: Algorithms

% \vspace{0.9mm}
% \noindent
% {\bf General Terms} \\
% Algorithms, Experimentation, Theory

% \vspace{0.9mm}
% \noindent
% {\bf Keywords} \\real root refinement; polynomial; real root problem, Boolean complexity

 Highly efficient and even nearly optimal algorithms
have been
 developed for the classical problem of univariate polynomial root-finding (see, e.g., \cite{P95}, \cite{P02},
\cite{MNP13}, and the bibliography therein), but this is still
 an area of active research.
By combining some 
 powerful techniques developed in this area we
devise new nearly optimal 
algorithms, whose  substantial merit
is their simplicity, important for the implementation.
 
\bigskip 

We first recall the basic concept of the {\em isolation ratio},
central also for \cite{P95}, \cite{P02}.
Assume a real or complex polynomial 
$p=p(x)=\sum^{d}_{i=0}p_ix^i=p_n\prod^d_{j=1}(x-z_j),~~~ p_d\ne 0,$
of degree $d$, an annulus $A(X,R,r)=\{x: r\le |x-X| \le R \}$
 on the complex plane
 with a center $X$
%v, 
and
the radii $r$ and $R$ of the boundary circles. Then the 
 internal
disc $D(X, r)=\{x: |x-X|\le r\}$ is  $R/r$-{\em isolated}
and $R/r$ is its {\em isolation ratio} 
if  the polynomial $p$ has no roots 
in the annulus.  
Next we reproduce \cite[Corollary~4.5]{renegar87}. It shows that
 Newton's iteration converges quadratically
 to a  single simple root of $p$ if is initiated at the center of a $5d^2$-isolated disc
that contains just this root.
\begin{theorem}\label{thren}
  Suppose both discs $D(c, r)$ and $D(c, r/s)$ for $s\ge 5d^2$
  contain a single simple root 
  $\alpha$ of 
  a polynomial $p=p(x)$ of degree $d$.
  Then  Newton's
  iteration 
  \begin{equation}\label{eqnewt}
    x_{k+1}=x_k-p(x_k)/p'(x_k),  k=0,1,\dots
  \end{equation}
  converges quadratically to the root $\alpha$
  right from the start provided $x_0=c$.
\end{theorem}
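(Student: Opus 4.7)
The plan is to derive a closed-form recursion for the Newton error $e_k := x_k - \alpha$ and then exploit the isolation hypothesis to bound its two ingredients. Using the logarithmic derivative $p'(x)/p(x) = \sum_{j=1}^{d} 1/(x - z_j)$ and singling out the root $\alpha = z_1$, the Newton step can be rewritten as
\[
e_{k+1} \;=\; e_k^{\,2}\,\frac{S(x_k)}{1 + e_k\, S(x_k)}, \qquad S(x) \;:=\; \sum_{j=2}^{d} \frac{1}{x - z_j}.
\]
Quadratic convergence then reduces to two uniform estimates: $|S(x_k)|$ must stay of order $d/r$, and $|e_k\, S(x_k)|$ must remain safely below $1$ so that the denominator is controlled.

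Next I would run an induction with invariant $|e_k| \leq r/s$. The base case $k = 0$ follows from $x_0 = c$ together with $\alpha \in D(c, r/s)$. Under the invariant, $|x_k - c| \leq |e_k| + |\alpha - c| \leq 2r/s$, and because the only root of $p$ inside $D(c, r)$ is $\alpha$ itself, every remaining $z_j$ with $j \geq 2$ satisfies $|x_k - z_j| \geq r - 2r/s = r(1 - 2/s)$. This gives the clean bound $|S(x_k)| < (d-1)/(r(1 - 2/s))$ and, after multiplying by $|e_k| \leq r/s$, the denominator bound $|e_k\, S(x_k)| \leq (d-1)/(s - 2)$. For $s \geq 5d^2$ the right-hand side is at most $1/(5d)$, which settles the denominator.

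Substituting these bounds into the recursion produces an inequality of the form $|e_{k+1}| \leq K\, |e_k|^2 / r$ with $K = O(d)$ (tracking constants yields something like $K \leq 5d/4$). Combined with the initial bound $|e_0| \leq r/(5d^2)$, this gives $K|e_0|/r \leq 1/(4d) \leq 1/4$, which simultaneously closes the induction (since $|e_{k+1}| \leq |e_k|/4 \leq r/s$) and delivers the self-starting quadratic recursion $(|e_{k+1}|/r) \leq K (|e_k|/r)^2$ right from $x_0 = c$, proving the theorem.

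The main obstacle is bookkeeping rather than insight: the constant $5d^2$ is essentially forced by the three separate places where $s$ enters the argument (bounding $|S(x_k)|$, keeping $|e_k\, S(x_k)| < 1$, and enabling the self-starting quadratic decay), so the constants must be matched carefully in each. Working throughout with the normalized error $\eta_k := |e_k|/r$ keeps the recursion $\eta_{k+1} \leq K_d\, \eta_k^{\,2}$ with $K_d = O(d)$ transparent, so that the verification $K_d \eta_0 < 1$ becomes essentially mechanical.
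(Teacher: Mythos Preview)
Your argument is correct: the logarithmic-derivative identity gives the exact error recursion
\[
e_{k+1} \;=\; \frac{e_k^{\,2}\,S(x_k)}{1+e_k S(x_k)},
\]
and your invariant $|e_k|\le r/s$ together with $|z_j-c|>r$ for $j\ge 2$ yields $|S(x_k)|\le (d-1)/(r(1-2/s))$ and $|e_k S(x_k)|\le (d-1)/(s-2)\le 1/(5d)$ once $s\ge 5d^2$, from which both the closing of the induction and the self-starting quadratic bound $\eta_{k+1}\le K_d\,\eta_k^{\,2}$ with $K_d\eta_0\le 1/(4d)$ follow just as you say. (For $d=1$ the sum $S$ is empty and Newton hits the root in one step, so the interesting case is $d\ge 2$, where all your numerical inequalities go through.)

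As for comparison with the paper: the paper does not prove this theorem at all. It is quoted verbatim as \cite[Corollary~4.5]{renegar87} and used as a black box. So you have supplied a self-contained elementary proof where the authors simply import the result; your approach via the partial-fraction form of $p'/p$ is in fact the natural one and is essentially how Renegar's estimate is obtained, so nothing is lost by including it.
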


Now suppose that we are given a disc with a single simple zero of $p$
having an isolation ratio 
$1+\eta$ for a fixed  constant $\eta>0$.
Can we increase the ratio to $5d^2$?
Yes, we just need to apply a technique already used in 
\cite{schoenhage82} for the computation of the power sums of the roots
lying inside such a disc. In our case this is a single root,
the power sum is the root itself, 
and we just need its approximation $c$ within an error at most $\Delta$
such that $r\eta/\Delta \ge 5d^2$. Indeed in this case
$\Delta\le 0.2r\eta/d^2$, and so the disc $D(c,\Delta)$ is $5d^2$-isolated. 

We can shift and scale the variable $x$, and so 
wlog we assume dealing with 
a $(1+t)^2$-isolated disc $D(0,r)$ for $r=1/(1+t)$ for a fixed $t>0$,
and with polynomial $p$ having a single simple root $z_1$ in this disc.
Recall
the Laurent expansion,
% for this polynomial.
% $p(x)$ having $k$ roots inside the 
%unit disc $D(0,1)$ and $d-k$ roots outside it (in our case $k=1$),

\begin{equation}\label{equ7.12.3}
  \begin{aligned}
    \frac{p'(x)}{p(x)} & =\sum_{j=1}^d\frac{1}{x-z_j} 
    = -\sum_{k=1}^\infty S_k x^{k-1}+\sum_{k=0}^\infty s_kx^{-k-1}  \\
    & = \sum_{h=-\infty}^{\infty}c_hx^h.
\end{aligned}
\end{equation}
Here $|x|=1$,
%\begin{equation}\label{equ7.12.4}
$s_0=1,~~s_k=z_{1}^k,~~S_k=
\sum_{i=2}^d~z_{i}^{-k},~k=1,2,\ldots$
%\end{equation}
%$\{z_{j(1)},\ldots,z_{j(n)}\}$ is the set of all  zeros 
%\index{zeros}of $p(x)$
%enumerated so that $|z_{j(i)}|<1$ if and only if $i\le k$.
Consequently $s_k=z_1^k$,
%$p(x)$ lying inside the unit disc
whereas $S_k$ is the $k$th power sum of the zeros of
the reverse polynomial $p_{\textrm{rev}}(x)$ that lie in the disc
$D(0,r)$. 
The leftmost equation of
(\ref{equ7.12.3}) is verified by the differentiation of
$p(x)=p_n\prod_{j=1}^d(x-z_j)$. The middle equation 
%of (\ref{equ7.12.3})
is implied by the  decompositions 
%$$\begin{array}{lll}
$\frac{1}{x-z_{1}}%&
=~\frac{1}{x}\sum_{h=0}^\infty\left(\frac{z_{1}}{x}\right)^h%&
$ and 
%\\
$\frac{1}{x-z_{i}}%&
=~-\frac{1}{z_{i}}\sum_{h=0}^\infty\left(\frac{x}{z_{i}}\right)^h%&
\mbox{
for }i>1$, provided $|x|=1$ for all $i$.
%\end{array}$$

%We assume that a natural number $Q$ and a positive $t$ are fixed
% and $D=D(0,r)$ is 
%a $(1+t)^2$-isolated disc 
%for $r=1/(1+t)$. (The
%extension from $D(0,r)$ to 
%!!!??? a
%any $(1+\nu)^2$-isolated disc 
%$D$ is by scaling and/or shifting the variable.) 
We cover the case of any 
%v natural number
positive integer $k$, although we only need the case where $k=1$.
For a fixed positive integer
%v natural  number 
$q$ we
compute the  approximations $s_k^*\approx s_k$ as follows,
\begin{equation}\label{equ7.12.2}
s_k^*=\frac{1}{q}\sum_{j=0}^{q-1}\omega^{j(k+1)}p(\omega^j)/p'(\omega^j),
%!!!???
~~k=1,2,\ldots,q-1.
\end{equation}
Here $\omega=\omega_q=\exp(2\pi\sqrt {-1}/q)$ is a primitive $q$th
root of unity. 
%We will choose $q$ of order $\log (d)$. 
Then the evaluation of 
the polynomial $p(x)$ at the $q$th roots of unity amounts to the same task for
a polynomial $p_q(x)$ of degree at most $q-1$ with the coefficients 
$p_{q,i}=\sum_{j=0}^lp_{i+jq}$ for $l=\lfloor d/q\rfloor$
obtained by means of less than $d$ additions of the coefficients of $p$.

Having computed the polynomial $p_q(x)$
we reduce the evaluation of all the desired approximations $s_k^*$ 
for $k=1,\ldots,q-1$
essentially to
performing three DFTs,\index{DFT}
 each on $q$ points, that is to a total of
$\OO(q\log (q))$ ops.\index{op}
 Namely, we apply two DFTs\index{DFT} 
%!!!!
 to compute
$p(\omega^i)$ and $p'(\omega^i)$ for $i=0,1,\ldots,q-1$ and a single
DFT\index{DFT} to multiply the DFT matrix
$\Omega=[\omega^{hi}]_{h,i=0}^{q-1}$ by the vector
${\bf v}=[p(\omega^i)/p'(\omega^i]_{i=0}^{q-1}$.

Let us estimate the approximation errors. Equations (\ref{equ7.12.3}) and
(\ref{equ7.12.2}) imply that
$$s_k^*=\sum_{l=-\infty}^{+\infty}c_{-k-1+lq}.$$ Moreover,
(\ref{equ7.12.3}) for $h=-k-1,k\ge 1$ implies that $s_k=c_{-k-1}$,
whereas (\ref{equ7.12.3}) for $h=k-1, k\ge 1$ implies that
$S_k=-c_{k-1}$. Consequently
$$s_k^*-s_k=\sum_{l=1}^{\infty}(c_{lq-k-1}+c_{-lq-k-1}).$$ We assumed
in (\ref{equ7.12.2}) that $0<k<q-1$. It follows that
$c_{-lq-k-1}=s_{lq+k}$ and $c_{lq-k-1}=-S_{lq-k}$ for
$l=1,2,\ldots$, and we obtain
\begin{equation}\label{equ7.12.5}
s_k^*-s_k=\sum_{l=1}^\infty(s_{lq+k}-S_{lq-k}).
\end{equation}
On the other hand $|s_h|\le
z^h,~~|S_h|\le(d-1)z^h$, $h=1,2,\ldots$ where
%begin{equation}\label{equ7.12.60}
$z=\max_{1\le j\le d}\min(|z_j|,1/|z_j|)$,
and so 
$z\le \frac{1}{1+t}$ in our case.
%end{equation}
Substitute these bounds into (\ref{equ7.12.5}) and obtain
%\begin{equation}\label{equ7.12.6}
$|s_k^*-s_k|\le(z^{q+k}+(d-1)z^{q-k})/(1-z^q)$.
%\end{equation}
%for the above $z$.
% in (\ref{equ7.12.60}).
Therefore it is sufficient to  choose $q$ of order $\log (d)$ to 
decrease the error of the approximation to the root 
 $z_1$ by a factor of $gd^h$ for any pair of constants $g$ and
$h$, and so we can ensure the desired error bound $\Delta$.  
To support this computation we only need
 less than $d$ additions, followed by $\OO(\log (d))$
evaluations of the polynomial $p_q(x)$
of degree $q-1$ 
at the $l$th roots of unity for $l=O(\log (d))$.
%v which 
This involves $\OO(\log (d)\log (\log (d)))$ ops overall.
(Here and hereafter ``ops" stand for ``arithmetic operations".)
Summarizing we obtain the following estimates.

%------------------------------------------------------------------------------

\begin{theorem}\label{thisol}
Suppose the unit disc $D(0,r)=\{x:~|x|\le 1\}$ is $(1+\eta)^2$-isolated
for $(1+\eta)r=1$ and a fixed $\eta>0$ and
 contains a single simple root $z$ of a polynomial $p=p(x)$ of a degree $d$. Then 
it is sufficient to apply less than $d$ additions
and $\OO(\log (d)\log (\log (d)))$ other  ops
%arithmetic operations
to compute a $5d^2$-isolated subdisc of $D(0,r)$ containing this root.
\end{theorem}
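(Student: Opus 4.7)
The plan is to take $c := s_1^*$ from (\ref{equ7.12.2}) with $k = 1$ as the center of the desired subdisc, and then to read off the rest from the estimates assembled in the text above. First I would fix the target precision: since $D(0,r)$ contains only the root $z_1$, the power sum $s_1$ equals $z_1$, so the error of $c$ as an approximation to $z_1$ is exactly $|s_1^* - s_1|$. Setting $\Delta := 0.2\,r\eta/d^2$ makes $r\eta/\Delta = 5d^2$, so, as already noted right after Theorem~\ref{thren}, any $c$ with $|c - z_1| \le \Delta$ yields a $5d^2$-isolated disc $D(c,\Delta)$. One still has to check that no root other than $z_1$ lies in the enlarged disc $D(c,5d^2\Delta)$, and this follows from $|z_j| > 1+\eta$ for $j\ge 2$ (which is exactly the $(1+\eta)^2$-isolation of $D(0,r)$ combined with $(1+\eta)r = 1$) together with the bound $|c| + 5d^2\Delta \le r + (5d^2+1)\Delta < 1+\eta$, which also shows that $D(c,\Delta)$ is contained in $D(0,r)$ up to a $1+O(1/d^2)$ factor.

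The core of the argument, and the only place where real work happens, is the error estimate for the DFT approximation. The circle $|x|=1$ lies inside the empty annulus of $D(0,r)$, so the Laurent identity (\ref{equ7.12.3}) and its consequence (\ref{equ7.12.5}) apply. Plugging $z \le 1/(1+\eta)$ into the geometric bounds $|s_h| \le z^h$ and $|S_h| \le (d-1)z^h$ yields $|c - z_1| \le (z^{q+1} + (d-1)z^{q-1})/(1-z^q)$. The main (and really the only) obstacle is to force this bound below $\Delta$ with $q = O(\log d)$; it clears because $\eta$ is a fixed positive constant and hence $\log(1/z) \ge \log(1+\eta)$ is bounded away from zero, so the geometric decay $z^{q-1}$ absorbs both the factor $d-1$ and the target $d^{-2}$ with only an $O(\log d)$ additive increase in $q$. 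Concretely, $q := \lceil C(\eta) \log d \rceil$ suffices for a constant $C(\eta)$ depending only on $\eta$.

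Finally I would add up the operations. The reduction of $p$ to the polynomial $p_q$ of degree less than $q$ takes fewer than $d$ additions, as already explained. Evaluating $p_q$ and $p_q'$ at the $q$-th roots of unity amounts to two DFTs of length $q$, and the weighted sum in (\ref{equ7.12.2}) is a third DFT; at $q = \Theta(\log d)$ each of the three DFTs costs $O(q \log q) = O(\log d\,\log\log d)$ ops, which matches the claim.
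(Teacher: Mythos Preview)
Your proposal is correct and follows essentially the same approach as the paper: take $c=s_1^\ast$ from (\ref{equ7.12.2}), bound $|s_1^\ast-s_1|$ via the Laurent expansion identity (\ref{equ7.12.5}) and the geometric estimates on $|s_h|,|S_h|$, choose $q=\Theta(\log d)$ so that the error drops below $\Delta=0.2r\eta/d^2$, and count the $<d$ additions for forming $p_q$ plus three length-$q$ DFTs. You are in fact slightly more careful than the paper in explicitly checking that the other roots $z_j$, $j\ge 2$, stay outside $D(c,5d^2\Delta)$; your observation that $D(c,\Delta)$ is only a subdisc of $D(0,r)$ up to a $1+O(1/d^2)$ dilation is accurate and is a harmless technicality the paper glosses over (one can absorb it by shrinking $\Delta$ by a constant factor, which does not affect the asymptotics).
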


Combine Thm.~\ref{thren} and \ref{thisol} and obtain the following result.

%------------------------------------------------------------------------------

\begin{corollary}\label{corref}
Under the assumptions of Theorem \ref{thisol}
we can approximate the root $z$ of the polynomial $p(x)$
within a fixed positive error bound $\epsilon<1$ by using 
$\OO(\log (d)\log (\log (d))+d\log (\log (1/\epsilon)))$ ops.
%arithmetic operations.
\end{corollary}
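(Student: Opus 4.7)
The plan is to chain Theorem~\ref{thisol} with Theorem~\ref{thren}. First apply Theorem~\ref{thisol} to refine the given $(1+\eta)^2$-isolated disc $D(0,r)$ into a $5d^2$-isolated subdisc $D(c,\Delta)$ still containing the simple root $z$; by the theorem this costs fewer than $d$ additions and $\OO(\log(d)\log\log(d))$ further ops. Taking $x_0 = c$ and invoking Theorem~\ref{thren} then puts Newton's iteration~\eqref{eqnewt} into its quadratic-convergence regime from the very first step.

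Next I would count the Newton steps required to bring the error below a prescribed $\epsilon < 1$. Quadratic convergence yields a recurrence of the shape $|x_{k+1}-z|\le C\,|x_k-z|^2$, so starting from $|x_0-z|\le\Delta\le r\le 1$, after $k$ steps we have $|x_k-z|\le C^{2^k-1}\,|x_0-z|^{2^k}$, and choosing $k = \OO(\log\log(1/\epsilon))$ is enough. Each Newton update evaluates $p$ and $p'$ at a single point, which Horner's rule accomplishes in $\OO(d)$ ops once the coefficients of $p'$ have been derived from those of $p$ in a single $\OO(d)$ precomputation. Hence the Newton phase contributes $\OO(d\,\log\log(1/\epsilon))$ ops.

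Summing the two phases and absorbing the linear preprocessing term into $d\,\log\log(1/\epsilon)$ yields the claimed bound $\OO(\log(d)\log\log(d) + d\,\log\log(1/\epsilon))$. The only step that needs any care is verifying that the $\Delta$ produced by Theorem~\ref{thisol} is small enough—relative to the constant $C$ implicit in the quadratic-convergence estimate of Theorem~\ref{thren}—to guarantee immediate quadratic contraction; the $5d^2$-isolation supplies exactly this margin, since Renegar's threshold in~\cite{renegar87} is calibrated to it. Everything else is routine accounting of arithmetic operations.
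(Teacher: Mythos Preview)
Your proposal is correct and follows exactly the route the paper takes: the paper's own proof is the single line ``Combine Thm.~\ref{thren} and~\ref{thisol},'' and you have simply spelled out that combination---Theorem~\ref{thisol} for the $5d^2$-isolated subdisc, then Theorem~\ref{thren} for quadratically convergent Newton steps counted via Horner. Nothing differs except that you supply the routine op-counting details the paper leaves implicit.
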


%------------------------------------------------------------------------------

\begin{corollary}\label{corref1}
Suppose that we are given $d$ discs, each containing a single simple root 
of a polynomial $p=p(x)$ of degree $d$ and each being $(1+\eta)^2$-isolated 
for a fixed $\eta>0$.
Then we can approximate all $d$ roots  of this polynomial 
within a fixed positive error bound $\epsilon<1$ by using 
$\OO(d\log^2 (d)(1+\log (\log (1/\epsilon))))$ ops.
%arithmetic operations.
\end{corollary}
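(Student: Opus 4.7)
Combining Corollary~\ref{corref} with independent application to each disc yields only $\OO(d\log d\log\log d + d^2\log\log(1/\epsilon))$ ops, so the plan is to share work across the $d$ discs via fast multipoint polynomial evaluation: a degree-$d$ polynomial can be evaluated at $d$ prescribed points in $\OO(d\log^2 d)$ ops by the standard subproduct-tree algorithm.

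\emph{Isolation phase.} For each disc $D(X_i,r_i)$ the procedure behind Theorem~\ref{thisol} needs the values of $p$ and $p'$ at the $q=\Theta(\log d)$ sample points $X_i + r_i(1+\eta)\omega_q^{\,j}$, $j=0,\ldots,q-1$. Pooled across all $d$ discs that is $\Theta(d\log d)$ evaluation points. I would use (batched) fast multipoint evaluation to compute all these values in $\OO(d\log^2 d)$ ops. The remaining per-disc work reduces to a length-$q$ DFT costing $\OO(\log d\log\log d)$ ops, for an overall $\OO(d\log d\log\log d)$ contribution. After this stage we hold a $(5d^2)$-isolated subdisc around every root, whose center serves as the initial Newton guess $x_0^{(i)}$.

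\emph{Refinement phase.} I would then run Newton's iteration synchronously on the full vector $(x_k^{(1)},\ldots,x_k^{(d)})$, replacing it by $(x_k^{(i)}-p(x_k^{(i)})/p'(x_k^{(i)}))_{i=1}^d$ using one call to fast multipoint evaluation for $p$ and one for $p'$ per global iteration, at a cost of $\OO(d\log^2 d)$ per global step. By Theorem~\ref{thren} each scalar iterate converges quadratically from the very first step, so $\OO(\log\log(1/\epsilon))$ global iterations drive every error below $\epsilon$. Adding isolation and refinement gives $\OO(d\log^2 d) + \OO(d\log^2 d\cdot\log\log(1/\epsilon)) = \OO(d\log^2 d(1+\log\log(1/\epsilon)))$.

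The main obstacle is the aggregated isolation step: one has to check that multipoint evaluation at the pooled $\Theta(d\log d)$ sample points really stays within $\OO(d\log^2 d)$ ops (rather than inflating by an extra $\log d$ factor under a naive batching into $\Theta(\log d)$ blocks of $d$ points), and that the error bound of Theorem~\ref{thisol} transfers uniformly to all $d$ discs under a single common choice $q=\Theta(\log d)$ depending only on $\eta$ and $d$. The refinement phase is essentially immediate once Theorem~\ref{thren} is coupled with any standard fast multipoint evaluation routine, so the entire technical difficulty sits in the isolation-stage bookkeeping.
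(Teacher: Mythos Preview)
Your proposal is correct and follows essentially the same route as the paper: replace the per-disc FFT evaluations by a single multipoint (Moenck--Borodin / subproduct-tree) evaluation of $p$ and $p'$ at the $\Theta(d\log d)$ pooled sample points on the boundary circles, and likewise run Newton's iteration synchronously at $d$ points via multipoint evaluation. The paper adds one cosmetic detail you omit---it remarks that the final per-disc summation amounts to multiplying by a matrix $c[1]_{j,k}+\Omega$ rather than $\Omega$, which costs only $O(d)$ extra ops---but your ``length-$q$ DFT per disc'' handles the same step. The obstacle you single out (keeping the pooled evaluation within $O(d\log^2 d)$ rather than $O(d\log^3 d)$) is real and the paper does not spell out its resolution either; in any case the refinement phase already matches the stated bound, so your analysis and the paper's coincide.
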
 

%------------------------------------------------------------------------------

\begin{proof}
 Apply the same algorithm that supports Corollary~\ref{corref}
concurrently in all $d$ given discs, but instead of the
$q$th roots of unity use $q$ equally spaced points at the boundary 
circle of each input disc (that is $dq=O(d\log d)$ points overall)
and instead of FFT apply 
the Moenck--Bo\-ro\-din algorithm for multipoint polynomial evaluation.
Also use it at the stage of performing concurrent Newton's iteration
 initialized at the centers  of the $5d^2$-isolated subdiscs of the 
$d$ input discs, each subdisc computed by the
 algorithm that supports Theorem~\ref{thisol}.
Here we work with the $d$th degree  polynomial $p$ rather than with
the $q$th degree polynomials $p_q$
because to support transition to polynomials $p_q$ of the degree $q$
 for $d$ discs we would need to perform $d$ shifts and scalings
of the variable $x$. Instead we employ the Moenck--Bo\-ro\-din algorithm,
which still enables us to obtain a nearly optimal root-refiner.
Technically, in a relatively minor change of our algorithm, we replace the   
matrix $\Omega=[\omega^{j(k+1)}]_{j,k}$ in (\ref{equ7.12.2})  
by the matrix $[c+\omega^{j(k+1)}]_{j,k}=c[1]_{j,k}+\Omega$ where $c$ is 
 invariant in $j$ and $k$.
The multiplication of the new matrix by a vector ${\bf v}$ is still reduced to 
multiplication of  the matrix $\Omega$ by a vector ${\bf v}$
with the additional $3d$ ops for computing the vector $c[1]_{j,k}{\bf v}$
and adding it to the vector $\Omega {\bf v}$.
\end{proof}

The Moenck--Borodin algorithm uses nearly linear arithmetic time, and
\cite{K98}
%Kirrinnis98 
proved that this algorithm supports  multipoint 
polynomial evaluation at a low Boolean cost as well 
(see also
%J. van der Hoeven
 \cite{vdH08}, \cite{PT13}, \cite{KS13},
%KobelSagralof13,
\cite{PTa}, \cite{Pa}, \cite{Pb}). 
Consequently {\em our algorithm supporting Corollary \ref{corref1}
can be extended to support a nearly optimal 
Boolean cost bound for refining all simple isolated roots of a polynomial}.

We can immediately relax the assumption that the roots are simple 
because our proof of Theorem \ref{thisol} applies to a multiple root as well.
Furthermore deduce from the Lucas theorem that the isolation ratio
of the basic discs in our algorithms does not decrease when we shift from a polynomial to its derivative and 
higher order derivatives. Therefore we can just apply
Newton's iteration to the derivative or to a higher order derivative to
approximate a double or multiple root, respectively.  

%!!
\subsection*{Boolean cost bounds}

%v For the analysis of the Boolean complexity 
%v we use some
%v results from \cite{PT13}--\cite{PT14} and write
Hereafter \sOB 
%to 
denotes the
bit or Boolean complexity ignoring logarithmic factors.
To estimate it we apply some
results from \cite{PT13}--\cite{PT14},
which hold 
%The bounds that we use apply to the more 
in the general case
where the coefficients of the polynomials are known up to an arbitrary
precision. In our case the input polynomial is known exactly; 
the parameter $\lambda$ 
to be specified
in the sequel could be considered as  the working precision.

Let $p$ be given as 
a $\lambda$-approximation, ie $\lg\normi{p -\wt{p}} \leq -\lambda$.
We compute $p_q$ 
%v
by using $d$ additions.
This produces
a polynomial such that
$\lg \normi{p_q} \leq \tau + \lg{d}$, and
$\lg \normi{p_q - \wt{p}_q} \leq -\lambda + \tau \lg{d} + 1/2\lg^2{d} + 1/2\lg{d}
=  \OO(-\lambda + \tau\lg{d} + \lg^2{d})$.

Similar bounds hold for $p_q'$, ie
$\lg \normi{p'_q} \leq \tau + 2\lg{d}$, and
$\lg \normi{p'_q - \wt{p'}_q} \leq  -\lambda + \tau \lg{d} + 3/2\lg^2{d} + 1/2\lg{d}
=  \OO(-\lambda + \tau\lg{d} + \lg^2{d})$.

Recall that
$\abs{p'_q(\omega^{i})} \leq  \tau + 2\lg{d} + \lg\lg{d} + 2$
and  
$\abs{p'_q(\omega^{i}) - \wt{{p'_q(\omega^{i})}}} \leq 
-\lambda + \tau\lg(2d) + 3/2\lg^2{d} + 5/2\lg{d} + \lg{\lg{d}} + 5$
for all $i$,
\cite[Lemma~16]{PTa}, 
and similar bounds hold for $p_q(\omega^{i})$.

The divisions
 $p_q(\omega^{i})/p'_q(\omega^{i})$ 
output
complex numbers such that 
$\abs{p_q(\omega)/p'_q(\omega)} \leq  \tau + 2\lg{d} + \lg\lg{d} + 2$
  with
the logarithm of the error 
$\leq -\lambda + \tau\lg(4d) + 3/2\lg^2{d} + 9/2\lg{d} + 2\lg{\lg{d}} + 11$.

The final DFT produces numbers
%v ,  
such that the logarithms of their magnitudes 
are not greater than
%v 
$\tau + 2\lg{d} + 2\lg\lg{d} + 4$
and the logarithms of their approximation errors 
are at most
$ -\lambda + \tau\lg(8d) + 3/2\lg^2{d} + 13/2\lg{d} + 4\lg{\lg{d}} + 18$,
\cite[Lemma~16]{PTa}.

To achieve an error 
%v of 
within  $2^{-\ell}$ in the final result, we 
%v have to
perform all the computations with accuracy
$\lambda = \ell +  \tau\lg(8d) + 3/2\lg^2{d} + 13/2\lg{d} + 4\lg{\lg{d}} + 18$,
that is 
$\ell = \OO( \ell + \tau \lg{d} + \lg^{2}{d}) = \sO(\ell + \tau)$.

We perform $d$ additions 
at the cost $\OB(d \lambda)$
and perform
the rest of computations, that is the 3 DFTs,
at the cost
$\OB( \lg{d} \, \lg\lg{d} \, \mu(\lambda))$
or $\sOB(d (\ell + \tau))$ \cite[Lemma~16]{PTa}.

If the root that we want to refine is not in the unit 
disc,
 then 
we 
replace $\tau$ in 
our bounds with 
$d \tau$.

We apply a similar analysis 
from
\cite[Section~2.3]{PT13} 
to the Newton 
iteration
 (see also
\cite[Section~2.3]{PTa}) and 
arrive at  the same 
asymptotic 
bounds on the Boolean complexity. 
%v and o
Only the overhead constants
change 
because now we perform computations with complex numbers.

The 
overall complexity is $\sOB(d^2 \tau + d \ell)$
and the working precision is $\OO(d\tau + \ell)$.

Here we assume the  exact input, that is 
assume the coefficients of the input
polynomials known up to arbitrary precision. 
For the refinement of 
the 
root
up to precision of $L$ bits, we arrive at
%v The complexity of the
an algorithm with the  complexity
in 
$\sOB(d^2 \tau + d L)$.

If we are interested in refining all complex roots, we cannot work
anymore with the polynomial $p_q$ of degree $q = \OO(\lg{d})$ unless
we add the cost of $d$ shifts of the initial approximations to the
origin.  Instead we rely on fast algorithms for multipoint
evaluation.  Initially we evaluate the polynomial $p$ of degree $d$ at
$\OO(d \lg{d})$ points, and we assume that $\lg{\normi{p}} \leq \tau$.
These $d$ points approximate the roots of $p$, and so their magnitude
is at most $\leq 2^{\tau}$.

We use the following result of \cite[Lemma 21]{PT14}.
Similar bounds appear in \cite{K98,KS13,vdH08}.
\begin{lemma}[Modular representation]
  \label{lem:rem-m-polys}
  Assume $m+1$ polynomials,
%v
 $F \in \CC[x]$ of degree $2 m n$ 
  and $P_j \in \CC[x]$ of degree $n$, for $j=1,\dots, m$
  such that $\normi{F} \leq 2^{\tau_1}$ and all roots of 
  the polynomials $P_j$
  for all $j$ have magnitude of at most
  $2^{\rho}$. 
  Furthermore assume $\lambda$-approximations of $F$ by  $\wt{F}$ and   
  of $P_j$ by $\wt{P}_j$
  such that 
  $\normi{F - \wt{F}} \leq 2^{-\lambda}$
  and 
  $\normi{P_j - \wt{P}_j} \leq 2^{-\lambda}$.
  Let  $ \ell = \lambda - \OO(\tau_1\lg{m} + m \, n \, \rho)$.
  Then we can compute 
  an $\ell$-approximations $\wt{F}_j$ of $F_j = F \mod P_j$
  for $j=1,\dots, m$
  such that $\normi{F_j - \wt{F}_j} \leq 2^{-\ell}$
  in 
  %% $$ \OB( m \, n \, \lg{n} \, \lg^{2}(m+n) \, \mu( \ell + \tau_1\lg{m} + m\,n\,\rho))$$
  %% or 
  $\sOB(m \, n \,(\ell + \tau_1 + m\,n\,\rho))$.
  %% Moreover, 
  %% $\lg\normi{F_j} \leq \tau_1 + (\rho+1)\,m\,n + n +\lg(m\,n)$.
\end{lemma}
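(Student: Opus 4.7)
The plan is to implement the classical Moenck--Borodin subproduct-tree scheme for simultaneous modular reduction, but to re-run the analysis under approximate arithmetic. We build a binary tree whose $m$ leaves store $P_1,\dots,P_m$ (padding with $1$ to the next power of two) and whose internal nodes store the product of the polynomials in their subtree; the root then stores $\Pi=\prod_j P_j$, of degree $mn$. Starting with the pair $(F,\Pi)$ at the root we descend: at each node with divisor $Q=Q_L Q_R$ and dividend $R$, we compute $R_L=R\bmod Q_L$ and $R_R=R\bmod Q_R$ by fast polynomial division. At the leaves we read off the $m$ desired remainders $F_j=F\bmod P_j$.

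First I would bound the coefficient magnitudes of every polynomial in the tree. A node at depth $k$ from the leaves stores a monic polynomial of degree $d_k=2^k n$ all of whose roots have magnitude at most $2^\rho$, so by Landau's bound on the elementary symmetric functions its coefficient sup-norm is at most $\binom{d_k}{\lfloor d_k/2\rfloor}\,2^{d_k\rho}\le 2^{d_k(\rho+1)}$, hence $2^{\OO(mn\rho)}$ uniformly across the tree. The dividends $R$ produced on the way down inherit an analogous bound $2^{\OO(\tau_1+mn\rho)}$, since each is obtained from $F$ by a bounded number of division steps against polynomials of this controlled size.

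Second I would propagate the approximation errors. A single fast multiplication of two approximate polynomials of degrees summing to $\OO(mn)$ whose coefficients have magnitude $2^{\OO(\tau_1+mn\rho)}$ loses $\OO(\lg(mn)+\tau_1+mn\rho)$ bits of precision relative to the exact product, by the standard FFT-with-rounding lemma (\cite[Lemma~16]{PTa}). Fast polynomial division reduces to a constant number of such multiplications plus a Newton inversion of the reverse divisor modulo a power of $x$; the inversion is well conditioned precisely because the divisor is monic with bounded coefficient size, and its precision loss fits in the same per-operation budget. Summing over the $\OO(\lg m)$ levels of the tree gives total precision loss $\OO(\tau_1\lg m + mn\rho)$, which matches the gap between $\lambda$ and $\ell$ in the statement, so the hypothesis $\ell=\lambda-\OO(\tau_1\lg m+mn\rho)$ ensures the output accuracy $2^{-\ell}$.

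Third I would account the Boolean cost. Each level of the tree performs $\OO(mn)$ arithmetic operations up to logarithmic factors, there are $\OO(\lg m)$ levels, and each operation is carried out at working precision $\OO(\ell+\tau_1+mn\rho)$ (again up to logarithmic factors). Multiplying these three factors yields $\sOB(mn(\ell+\tau_1+mn\rho))$, as claimed. The main obstacle is precisely the division/Newton-inversion analysis at each node: one must verify that perturbations in an approximate divisor do not amplify super-polynomially when the inversion is cascaded through $\lg m$ tree levels, and this is where the root-magnitude hypothesis $|z|\le 2^\rho$, which forces monicity together with the uniform coefficient bound $2^{\OO(mn\rho)}$ at every tree node, is decisive for the final precision estimate.
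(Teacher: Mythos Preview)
The paper does not prove this lemma at all: it is quoted verbatim as \cite[Lemma~21]{PT14} and merely invoked as a black box for the Boolean-cost estimates that follow. There is therefore no ``paper's own proof'' to compare your proposal against; any comparison would have to be with the cited technical report, not with this extended abstract.

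That said, your outline is the expected one. The Moenck--Borodin subproduct tree with approximate fast division is exactly the mechanism behind results of this type (and is what Kirrinnis~\cite{K98}, van der Hoeven~\cite{vdH08}, and Kobel--Sagraloff~\cite{KS13}, all cited alongside the lemma, analyze as well). One point worth tightening in your sketch: you claim each of the $\OO(\lg m)$ levels loses $\OO(\tau_1+mn\rho)$ bits, which would sum to $\OO(\tau_1\lg m + mn\rho\lg m)$ rather than the stated $\OO(\tau_1\lg m + mn\rho)$. To recover the sharper bound you need to observe that the divisor at depth $k$ from the leaves has degree $2^k n$ and hence coefficient bound $2^{\OO(2^k n\rho)}$, so the $\rho$-dependent losses form a geometric series summing to $\OO(mn\rho)$, while only the $\tau_1$ contribution is repeated $\lg m$ times. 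With that refinement your argument matches the claimed precision gap.
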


Using this lemma we bound the overall complexity of multipoint
evaluation by $\sOB(d(L + d \tau))$.  The same bounds holds at the
stage where we perform Newton's iteration.  We need to apply Newton's
operator $\sO(1)$ for each root.  Each application of the operators
consists of two polynomial evaluations.  We perform the evaluations
simultaneously and apply Lemma \ref{lem:rem-m-polys} to bound the
complexity.  On similar estimates for the refinement of the real roots
see \cite{PTa}.

\subsection*{Extensions}

The algorithm of \cite{MSW13}
%Melhorn,Sagraloff,Wang,ISSAC2013
computes at nearly optimal cost $64d$-isolated initial discs for all $d$ roots 
of a polynomial $p(x)$. By combining this algorithm with ours   
we    obtain a distinct alternative algorithm, which like the one of \cite{P95},
\cite{P02}, 
{\em supports  the record nearly optimal bounds on the Boolean complexity of 
the approximation of all complex polynomial roots},
but has the advantage of allowing substantially simpler implementation. 

Finally the same algorithm of \cite{schoenhage82} approximate the power sums
of any number $m$ of roots (forming, e.g., a single cluster or a number of clusters) 
in an isolated disc. The algorithm runs at about the same cost, already stated and depending 
just on the isolation ratio. Having the power sums available
we can readily compute the coefficients of the factor $f$ of $p$ of degree $d_f$,
whose roots are exactly the roots of $p$ in this disc: this is a
numerically stable algorithm using $\OO(m\log(m))$ ops  (cf. \cite[pages 34--35]{BP94}).

%Now suppose that we are given reasonable initial approximations
%$\bar f_1,\dots,\bar f_s$ to $s$ nonconstant factors $f_1,\dots,f_s$ of the
%polynomial $p$ such that $p=f_1\cdots f_s$ where the 
%roots of the factors $f_1,\dots,f_s$ lie in $s$ sufficiently 
%well isolated domains on the complex plane. 
%Then an algorithm of \cite{K98} enables efficient refinement of these $s$ factors.
%Namely at first we apply
%$\OO(d \log^2(d))$  ops to compute the partial fraction decomposition of the fraction $1/\bar f=$
%for $\bar f=\prod_{i=1}^s\bar f_i$ 
% (see \cite[pages 30--31]{BP94}).
%Then  we can refine this decomposition by applying the algorithm of \cite{K98}.
%The algorithm uses $\OO(d \log^2(d))$ ops per step,
%converges quadratically right from the start, and so in $\OO(\log(\log(1/\epsilon)))$ steps
%outputs approximations within the norm bound $\epsilon$ to all $s$ factors $f_1,\dots,f_s$ of $f$. 
%Its very simple version is
%presented in \cite{P12}. In particular 
% this algorithm readily approximates
%all roots if we have
% an initial approximate 
%factorization of the polynomial $p$ into the product of $d$ linear factors.

{
  \scriptsize
  \textbf{Acknowledgments.}
  VP is supported by NSF Grant CCF--1116736. 
%!! and PSC CUNY Awards 64512--0042 and 65792--0043.
  %%
  ET is partially supported by 
  GeoLMI 
  (ANR 2011 BS03 011 06), 
  HPAC (ANR ANR-11-BS02-013)
  and an
  FP7 Marie Curie Career Integration Grant.  
}

{ 
  \scriptsize

  % \bibliographystyle{abbrv}  
  % %\bibliography{refine} 
  % \bibliography{et,Personal,algebra,complexity,emiris,arithm,numeric,geometry,soft,ecg}
  
}

\end{document}